\documentclass{article} 
\usepackage{iclr2026_conference,times}

\usepackage{amsmath,amsfonts,bm}

\def\eqref#1{equation~\ref{#1}}

\def\1{\bm{1}}

\DeclareMathAlphabet{\mathsfit}{\encodingdefault}{\sfdefault}{m}{sl}
\SetMathAlphabet{\mathsfit}{bold}{\encodingdefault}{\sfdefault}{bx}{n}

\usepackage[utf8]{inputenc} 
\usepackage[T1]{fontenc}    
\usepackage{hyperref}       
\usepackage{url}            
\usepackage{booktabs}       
\usepackage{amsfonts}       
\usepackage{algorithm}      
\usepackage{nicefrac}       
\usepackage{graphicx}       
\usepackage{wrapfig}
\usepackage{caption}
\usepackage{microtype}      
\usepackage{xcolor}         
\usepackage{amsmath}
\usepackage{amssymb}
\usepackage{amsthm}
\usepackage{algpseudocode}

\newtheorem{lemma}{Lemma}[section] 
\newtheorem{theorem}{Theorem}[section]

\usepackage{enumitem}

\title{Fingerprinting Deep Neural Networks for Ownership Protection: An Analytical Approach}

\author{
  Guang Yang$^1$, Ziye Geng$^2$, Yihang Chen$^2$, and Changqing Luo$^2$ \\
  $^1$Virginia Commonwealth University, $^2$University of Houston \\
  \texttt{yangg2@vcu.edu}, \texttt{\{zgeng2, ychen165, cluo3\}@uh.edu}
}

\iclrfinalcopy 
\begin{document}

\maketitle

\begin{abstract}
Adversarial-example-based fingerprinting approaches, which leverage the decision boundary characteristics of deep neural networks (DNNs) to craft fingerprints, has proven effective for protecting model ownership. However, a fundamental challenge remains unresolved: how far a fingerprint should be placed from the decision boundary to simultaneously satisfy two essential properties—robustness and uniqueness—required for effective and reliable ownership protection.
Despite the importance of the fingerprint-to-boundary distance, existing works offer no theoretical solution and instead rely on empirical heuristics to determine it, which may lead to violations of either robustness or uniqueness properties.

We propose AnaFP, an analytical fingerprinting scheme that constructs fingerprints under theoretical guidance. Specifically, we formulate the fingerprint generation task as the problem of controlling the fingerprint-to-boundary distance through a tunable stretch factor. 
To ensure both robustness and uniqueness, we mathematically formalize these properties that determine the lower and upper bounds of the stretch factor.
These bounds jointly define an admissible interval within which the stretch factor must lie, thereby establishing a theoretical connection between the two constraints and the fingerprint-to-boundary distance. To enable practical fingerprint generation, we approximate the original (infinite) sets of pirated and independently trained models using two finite surrogate model pools and employ a quantile-based relaxation strategy to relax the derived bounds. Particularly, due to the circular dependency between the lower bound and the stretch factor, we apply a grid search strategy over the admissible interval to determine the most feasible stretch factor. Extensive experimental results demonstrate that AnaFP consistently outperforms prior methods, achieving effective and reliable ownership verification across diverse model architectures and model modification attacks.
\end{abstract}

\section{Introduction}

Deep neural networks (DNNs) are increasingly vulnerable to model piracy in real-world deployments~\citep{GGFG23,CMSD25,YMA25}. Adversaries may steal a model, apply performance-preserving model modification attacks to evade detection, and distribute these pirated ones as black-box services, thereby profiting from public usage while concealing the models' internal details. This emerging threat raises serious concerns about the intellectual property (IP) protection of DNN models. So far, extensive research has explored ownership protection techniques, such as watermarking \citep{RDNN19,ORBB21,WDNN24,CMSD25,SRAF25} and fingerprinting \citep{TMFA21,MVAM22,MCGF24,QuRD25}. While watermarking involves embedding identifiable patterns into DNN models—potentially introducing new security vulnerabilities \citep{CEHL21,UWSD24}, fingerprinting offers a non-intrusive alternative by leveraging the model's intrinsic characteristics to generate unique, verifiable fingerprints. This non-intrusive nature has made fingerprinting an increasingly appealing ownership protection approach.

To date, a broad spectrum of model fingerprinting schemes has been proposed \citep{IPIP21,AYSM22,CEHL21,UWSD24,DMFA24,GAFF24,GGFG23,NNFW22, NGNF23,FDNN22,MCGF24,QuRD25}. These schemes involve two phases: fingerprint generation and ownership verification \citep{DNNF21,MFDN22}. During the fingerprint generation phase, the inherent characteristics of a protected DNN model are leveraged to carefully craft fingerprints that elicit model-specific responses. In the subsequent ownership verification phase, these fingerprints are used to assess whether a suspect model is a pirated version of the protected model. The goal is to reliably identify pirated models derived from the protected one, while avoiding false attribution of independently trained models. To ensure this, effective fingerprints must exhibit two essential properties: 1) robustness—the ability to withstand performance-preserving model modifications, and 2) uniqueness—the capability to distinguish the protected model from other independently trained models \citep{MVAM22}.

A prominent line of work among existing model fingerprinting schemes builds on the key observation that decision boundaries are highly model-specific, even across models trained on the same dataset \citep{CNNL22}. This motivates adversarial-example-based fingerprinting approaches that leverage the unique decision boundary characteristics of a model to craft fingerprints \citep{IPIP21,CEHL21,DNNF21,NGNF23,FDNN22,AAFA20}. More specifically, fingerprints are crafted by applying minimal perturbations to input samples, pushing them across the decision boundary of the protected model, akin to adversarial example generation. Due to the inherent differences in decision boundaries, such perturbations typically induce prediction changes in the protected model but leave independently trained models largely unaffected \citep{IPNN13,AAFA20}. This resulting prediction asymmetry enables the construction of distinctive and reliable fingerprints for ownership verification \citep{AAFA20}.

However, a fundamental challenge in adversarial-example-based fingerprinting is: how far should a fingerprint be placed from the decision boundary to simultaneously satisfy the requirements of robustness and uniqueness? Ensuring uniqueness requires placing fingerprints close to decision boundaries, while enhancing robustness requires positioning fingerprints far from the boundaries.
Recent works \citep{IPIP21,MCGF24} have employed heuristic strategies to empirically select the fingerprint-to-boundary distance. However, such approaches lack theoretical guidance, potentially resulting in fingerprints that violate the two desired properties. Consequently, how to theoretically determine the fingerprint-to-boundary distance that simultaneously satisfies both uniqueness and robustness properties remains an open issue.

In this paper, we propose \textbf{AnaFP}, an \textbf{Ana}lytical adversarial-example-based \textbf{F}inger\textbf{P}rinting scheme that crafts fingerprints under theoretical guidance. Specifically, we first identify high-confidence samples from the dataset used to train the protected model as anchors for fingerprint generation. For each anchor, we compute a minimal perturbation that induces a prediction change in the protected model, thereby ensuring strong uniqueness. To further improve robustness, we introduce a stretch factor that scales the perturbation, controlling the fingerprint's distance from the decision boundary. To ensure both robustness and uniqueness, we mathematically formalize the robustness and uniqueness constraints and theoretically derive the lower and upper bounds of the stretch factor accordingly. These bounds define an admissible interval within which the stretch factor must lie, thus establishing a theoretical relationship between the two constraints and the fingerprint-to-boundary distance.

Although this theoretical relationship offers strong theoretical guarantees, its practical implementation poses several challenges. First, computing the bounds requires worst-case parameter estimation over all possible pirated and independently trained models—two theoretically infinite sets. Thus, we approximate the two sets using two surrogate model pools. Second, using the worst-case estimates often leads to overly conservative bounds, reducing the feasibility of fingerprint construction. To mitigate this, we employ a quantile-based relaxation mechanism to relax the two bounds through quantile-based parameter estimation. Third, there exists a circular dependency between the lower bound and the stretch factor. We address this by applying a grid search strategy over the admissible interval to determine the most feasible value for the stretch factor. We conduct extensive experiments across diverse model architectures and datasets, and experimental results demonstrate that AnaFP consistently achieves superior ownership verification performance, even under performance-preserving model modification attacks, outperforming existing adversarial-example-based fingerprinting approaches.

\section{Background}

\subsection{Adversarial Examples}

Adversarial examples are carefully crafted from samples in a clean dataset to induce a target model to make incorrect predictions \citep{IPNN13, EHAE15}. Given a model $f$ and an input-label pair $(x, y)$ in a clean dataset $D$, an adversarial example $\hat{x} = x + \delta$ can be obtained by solving:
\begin{equation}
\min \|x - \hat{x}\|, \quad \text{s.t.} \quad f(\hat{x}) \ne y,
\end{equation}
where $\|\cdot\|$ denotes a distance metric (e.g., $\ell_2$ or $\ell_\infty$ norm), and the perturbation $\delta$ is constrained to ensure imperceptibility. By inducing misclassifications with minimal perturbations, adversarial examples reveal the unique characteristics of a model’s decision boundary. This property makes them particularly effective for constructing fingerprints that capture model-specific behaviors used for ownership verification.

\subsection{Model Modification Attacks}

To circumvent ownership verification, adversaries usually launch performance-preserving model modification attacks that alter a pirated model's decision boundary while maintaining predictive accuracy. Common techniques include fine-tuning \citep{CSTL20}, pruning \citep{PFEC17}, knowledge distillation (KD) \citep{DKNN15}, and adversarial training \citep{ATAS24}. Specifically, fine-tuning adjusts the weights of a pirated model through additional model training; pruning eliminates less significant weights or neurons of a pirated model to reshape its structure; KD trains a new model with a different structure and parameters to replicate the behavior of a pirated model; and adversarial training updates model weights using a mixture of clean data and adversarial examples, thereby posing a unique threat to adversarial-example-based fingerprinting methods. Even worse, adversaries may combine multiple techniques, such as pruning followed by fine-tuning, to induce substantial shifts in the decision boundary, thereby increasing the likelihood of invalidating fingerprints.

\section{Problem Formulation}
\label{sec:problem}
\begin{wrapfigure}{r}{0.49\textwidth}
  \vspace{-10pt}
  \centering
    \captionsetup{belowskip=1pt,aboveskip=2pt}
  \includegraphics[width=0.49\textwidth]{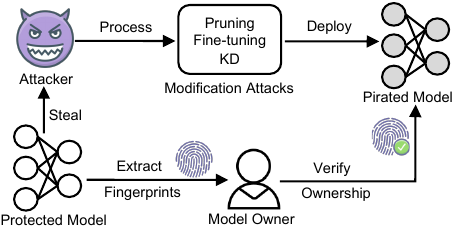}
  \caption{A typical model fingerprinting scenario.}
  \label{fig:scenario}
  \vspace{-10pt}
\end{wrapfigure}

We consider a typical simplified model fingerprinting scenario with two parties: a model owner and an attacker, as shown in Figure \ref{fig:scenario}. Specifically, the model owner trains a DNN model $P$ and deploys it as a service. The attacker acquires an unauthorized copy of $P$, applies performance-preserving model modifications to it, and subsequently redistributes the pirated model in a black-box manner (e.g., via API access). Let $\mathcal{V}_P$ denote the set of pirated models derived from $P$. To safeguard the intellectual property of $P$, the model owner leverages the adversarial example technique to produce a set of fingerprints $\mathcal{F}$ for $P$. Once identifying a suspect model $S$, the owner initiates ownership verification by querying $S$ using fingerprints in $\mathcal{F}$. The goal is to determine whether $S$ belongs to the pirated model set $\mathcal{V}_P$ or the independent model set $\mathcal{I}_P$ that includes models independently trained from scratch without any access to $P$.

To enable reliable verification, the \emph{fingerprint set} $\mathcal{F} = \{(x_i^{\star}, y_i^{\star})| 
1 \leq i \leq N_f \}$, where $N_f$ is the number of fingerprints, $x_i^{\star} \in \mathcal{X} \in \mathbb{R}^{d_1 \times \ldots \times d_n}$ is the input of fingerprint $i$, and $y_i^{\star} \in \mathcal{Y} = \{1, \ldots, K\}$ is the corresponding target label, is constructed with the goal of satisfying the following properties:
\begin{itemize}[left=0pt]
    \item \textbf{Robustness:} Any fingerprint $(x_i^{\star}, y_i^{\star}) \in \mathcal{F}$ remains effective against performance-preserving modification attacks, i.e., $P'(x_i^{\star}) = y_i^{\star}$, for $\forall P' \in \mathcal{V}_P$.
    
    \item \textbf{Uniqueness:} Any fingerprint $(x_i^{\star}, y_i^{\star}) \in \mathcal{F}$ is unlikely to be correctly predicted by any independently trained models, i.e., $I(x_i^{\star}) \ne y_i^{\star}$, for $\forall I \in \mathcal{I}_P$.

\end{itemize}

\section{The Design of AnaFP}
\label{sec:pipeline}

\begin{figure}[t]
\centering
\includegraphics[width=1.0\linewidth]{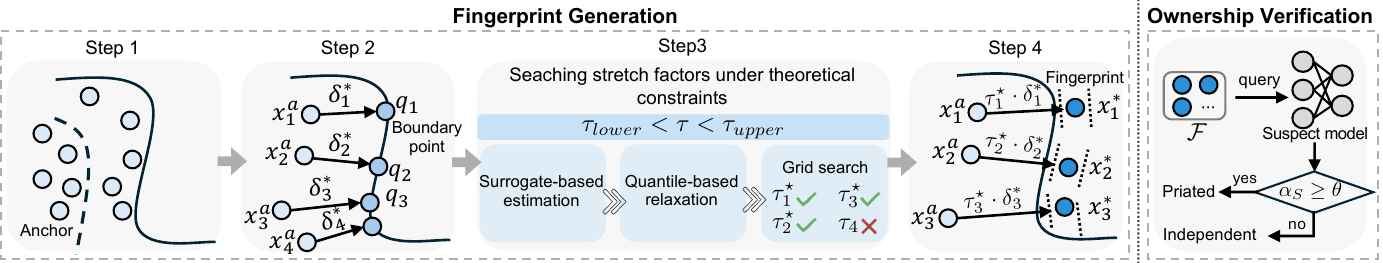}
\caption{The pipeline of fingerprint generation and ownership verification.}
\label{fig:overall_schematic}
\vspace{-15pt}
\end{figure}

We present AnaFP, an analytical fingerprinting scheme that mathematically formalizes the robustness and uniqueness properties, constructs practical fingerprints satisfying relaxed versions of these constraints, and enables ownership verification using these fingerprints. As illustrated in Figure \ref{fig:overall_schematic}, AnaFP comprises two phases: fingerprint generation and ownership verification. Particularly, the fingerprint generation phase is composed of four main steps. The first step selects high-confidence samples from a clean dataset as anchors, ensuring that independently trained models predict their original labels with high probability. 
The second step computes a minimal perturbation for each anchor such that the protected model, when evaluated on the perturbed anchor, outputs a different label from that of the original anchor. 
The third step calculates a stretch factor for each anchor. To this end, an admissible interval for a stretch factor is derived by first mathematically formulating robustness and uniqueness constraints and then relaxing the formulated ones based on surrogate-based parameter estimation and quantile-based relaxation. With this interval, a feasible stretch factor is determined by exploiting a grid search strategy. 
The fourth step generates fingerprints by first scaling the perturbation with the obtained stretch factor and then applying it to its corresponding anchor. All resulting fingerprints form the final fingerprint set that will be used for ownership verification. In the following, we detail the four steps of fingerprint generation and the ownership verification protocol.

\subsection{Step 1: Selecting High-Confidence Anchors}
\label{sec:anchors}

In the first step, we identify high-confidence samples from the training dataset used to train the protected model. These input-label pairs, referred to as anchors, are selected based on the model's confidence in correctly predicting their labels. 
To quantify the confidence, we define the logit margin of a sample $(x, y) \in D$ with respect to the protected model $P$ as $g_P(x) = s_{P,y}(x) - \max_{k \ne y} s_{P,k}(x)$,
where $s_{P,k}(x)$ is the logit (i.e., the pre-softmax outputs) of class $k$ outputted by $P$ for input $x$. 

We construct the anchor set $\mathcal{A}$ by selecting samples whose logit margins exceed a predefined threshold $m_{\text{anchor}}$, i.e., $\mathcal{A} = \left\{ (x^{a}, y) \in D \;\middle|\; g_P(x^{a}) \ge m_{\text{anchor}} \right\}$.
Intuitively, high-margin samples exhibit class-distinctive features that are reliably captured across independently trained models. As such, using these samples as anchors helps enhance the uniqueness of the resulting fingerprints. 

\subsection{Step 2: Computing Minimal Decision-Altering Perturbations}
\label{sec:delta}
In the second step, we compute a minimal perturbation for each anchor and apply it to produce a perturbed anchor that causes the protected model to change its prediction. Given an anchor $(x^a,y)$, we formulate a decision-altering perturbation minimization problem as
\begin{equation}
\label{eq:minpert}
\delta^{\!*} = \arg\min_{\delta} \|\delta\|_2, \quad \text{s.t.} \quad P(x^{a} + \delta) \ne y,
\end{equation}
where $\delta^{\!*}$ is the minimal perturbation for an anchor $(x^a, y) \in \mathcal{A}$, and $\|\cdot\|_2$ is the $\ell_2$ norm.

To solve the formulated problem, we employ the Carlini \& Wagner $\ell_2$ attack (C\&W-$\ell_2$) \citep{TERN17} to efficiently compute $\delta^{\!*}$. The resulting perturbed input is $q = x^{a} + \delta^{\!*}$, which we refer to as the \emph{boundary point}, as it resides on a decision boundary of the protected model. This is the closest point to the anchor $(x^{a}, y)$ at which the protected model $P$ alters its prediction to a different label.

\subsection{Step 3: Searching Stretch Factors Under Theoretical Constraints}
\label{sec:tau}

Since boundary points are located directly on the decision boundary, they are highly vulnerable to boundary shifts caused by model modifications, often resulting in changes in model predictions. To improve robustness, we stretch the perturbation $\delta^{\!*}$ along its direction and apply it to the anchor to generate a fingerprint positioned farther from the boundary. Specifically, given an anchor $(x^a, y)$, the corresponding fingerprint is defined as $(x^{\star} = x^{a} + \tau \delta^{\!*}, y^{\star} = P(x^{\star}))$, where  $\tau > 1$ is a stretch factor that controls the distance of the fingerprint from the decision boundary.

A central challenge lies in selecting an appropriate stretch factor $\tau$, as the resulting fingerprint needs to satisfy both robustness and uniqueness properties. To mathematically formalize these requirements, we first define a lower bound $\tau_{\text{lower}}$ and an upper bound $\tau_{\text{upper}}$ with respect to $\tau$, corresponding to the robustness and uniqueness constraints, respectively: the robustness constraint requires $\tau >\tau_{\text{lower}}$, while the uniqueness constraint requires $\tau< \tau_{\text{upper}}$. Then, we derive a theoretical constraint that defines the admissible interval of the stretch factor $\tau$:
\begin{equation}
\label{eq:two bounds}
1+\frac{2\,\epsilon_{\text{logit}}}{c_g\,\lVert\delta^{\!*}\rVert} = \tau_{\text{lower}} < \tau < \tau_{\text{upper}} = \frac{m_{\min}}{2 L_{\text{uniq}}\,\lVert\delta^{\!*}\rVert},
\end{equation}
where $c_g = \|\nabla g_P(q)\|_2$ is the norm of the gradient of the logit margin function $g_P$ at the boundary point $q = x^a + \delta^{\!*}$, as well as $m_{\text{min}}$, $L_{\text{uniq}}$, and $\epsilon_{\text{logit}}$ are defined as follows:
\begin{itemize}[left=0pt]
    \item $m_{\min}$: a lower bound on the logit margin of independently trained models at the anchor $(x^{a}, y)$: $s_{I,y}(x^a) - \max_{k \neq y} s_{I,k}(x^a) \geq m_{\min}$, for $\forall k \in \mathcal{Y}, \forall I \in \mathcal{I}_{P}$.
  
    \item $L_{\text{uniq}}$: an upper bound on the local Lipschitz constant of independently trained models in the region between $x^{a}$ and $x^{\star}$, such that $\frac{\|s_I(x^{\star}) - s_I(x^a)\|_2}{\|x^{\star} - x^a\|_2} \leq L_{\text{uniq}}$, for $\forall I \in \mathcal{I}_{P}$, and $s_I(\cdot)$ is the logit vector of an independently trained model $I$.
    
    \item $\epsilon_{\text{logit}}$: an upper bound on the logit shift at $x^{\star}$ due to the performance-preserving model modification attack on a protected model $P$: $|s_{P',k}(x^{\star}) - s_{P,k}(x^{\star})| \le \epsilon_{\text{logit}}, \forall k \in \mathcal{Y}, \forall P' \in \mathcal{V}_{P}.$
\end{itemize}
The detailed derivation of Equation~(\ref{eq:two bounds}) is provided in Appendix \ref{sec:theory}.

\subsubsection{Parameter Estimation and Relaxation}

\textbf{Surrogate-based estimation.}
$m_{\min}$, $L_{\text{uniq}}$, and $\epsilon_{\text{logit}}$ are defined over $\mathcal{V}_{P}$ and $\mathcal{I}_{P}$, which, however, contain an infinite number of models respectively. Consequently, it is impossible to compute them directly. To address this issue, a common approach is to introduce two finite surrogate model pools to approximate the original sets \citep{MTDS21,MVAM22}: a surrogate pirated pool $\mathcal{V}_P^s$ and a surrogate independent pool $\mathcal{I}_P^s$. $\mathcal{V}_P^s$ contains the protected model $P$ and its pirated variants, while $\mathcal{I}_P^s$ is composed of independently trained models.

Then, each parameter is estimated over the corresponding surrogate pool: $m_{\min}$ is estimated as the minimum margin over $\mathcal{I}_P^s$, $L_{\text{uniq}}$ is estimated as the maximum local Lipschitz constant over $\mathcal{I}_P^s$, and $\epsilon_{\text{logit}}$ is estimated as the maximum logit shift over $\mathcal{V}_P^s$. While these finite surrogate pools serve as practical approximations of the infinite sets $\mathcal{I}_P$ and $\mathcal{V}_P$, this substitution inevitably relaxes the original theoretical constraints and introduces approximation error. Although such error is theoretically intractable and dependent on the choice of surrogate models, experimental results in Section~\ref{exp:sensitivity to surrogate} demonstrate that AnaFP remains robust to variations in pool size and diversity, consistently yielding stable verification performance.

\textbf{Quantile-based relaxation.} Directly adopting the most conservative estimates, i.e., the minimum value of $m_{\min}$ and the maximum values of $L_{\text{uniq}}$ and $\epsilon_{\text{logit}}$, may lead to overly strict lower and upper bounds on $\tau$, potentially resulting in a situation where no fingerprints simultaneously satisfy the constraints for robustness and uniqueness.
To address this issue, we employ a quantile-based relaxation strategy. Instead of using extreme values, we estimate each parameter based on its empirical quantile distribution over the surrogate pools: $m_{\min}$ is set to the $q_{\mathrm{margin}}$-quantile of the logit margins computed over $\mathcal{I}_P^s$, $L_{\text{uniq}}$ is set to the $q_{\mathrm{lip}}$-quantile of local Lipschitz constants over $\mathcal{I}_P^s$, and $\epsilon_{\text{logit}}$ is set to the $q_{\mathrm{eps}}$-quantile of logit shifts measured over $\mathcal{V}_P^s$. In this way, we relax the lower and upper bounds on $\tau$, balancing theoretic rigor with practical feasibility.

\subsubsection{The Grid Search Strategy for Finding A Feasible \texorpdfstring{$\tau$}{tau}}

Based on Equation (\ref{eq:two bounds}), the stretch factor $\tau$ is constrained by a lower bound $\tau_{\text{lower}}$ and an upper bound $\tau_{\text{upper}}$. While $\tau_{\text{upper}}$ can be explicitly computed using the estimated parameters, $\tau_{\text{lower}}$ is defined implicitly as a function of $\tau$—since it depends on the logit shift at the point $x^\star = x_a + \tau \delta^{\!*}$. This circular dependency introduces a non-trivial challenge in directly solving for $\tau_{\text{lower}}$. As a result, it becomes necessary to search for a feasible value of $\tau$ that satisfies the constraint $\tau_{\text{lower}}(\tau) \leq \tau \leq  \tau_{\text{upper}}$ and simultaneously achieves a balance between uniqueness and robustness in the resulting fingerprint. 

To address this, we exploit a grid search strategy to determine the feasible $\tau$. Specifically, given that $\tau_{\text{lower}} = 1+ \frac{2\,\epsilon_{\text{logit}}}{c_g\,\lVert\delta^{\!*}\rVert} \geq 1$, we define the search interval as $(1,\tau_{\text{upper}}]$. We first construct a candidate set $\mathcal{T}$ by uniformly sampling stretch factors within this interval. For each $\tau \in \mathcal{T}$, we compute $\tau_{\text{lower}}(\tau)$ and retain those satisfying $\tau \geq \tau_{\text{lower}}(\tau)$ to form the feasible set $\mathcal{T}_{\mathrm{feas}}$. If no valid $\tau$ exists within the search interval (i.e., $\mathcal{T}_{\mathrm{feas}} = \emptyset$), the corresponding anchor is discarded, as it cannot yield a valid fingerprint under the relaxed constraints. This infeasibility may arise in two cases: (i) $\tau_{\text{upper}} < 1$, rendering the interval void, or (ii) $\tau_{\text{upper}} \geq 1$, but no candidate $\tau$ satisfies $\tau > \tau_{\text{lower}}(\tau)$. Finally, from the feasible set $\mathcal{T}_{\mathrm{feas}}$, we select the most feasible stretch factor $\tau^{\star}$ by $\tau^{\star} = \arg\max_{\tau \in \mathcal{T}_{\mathrm{feas}}} \min\big\{\,\tau - \tau_{\text{lower}}(\tau),\ \tau_{\text{upper}} - \tau\,\big\}$. This selection maximizes the minimum slack to both bounds, offering the greatest possible tolerance to parameter-estimation errors.

\subsection{Step 4: Constructing the Fingerprint Set}
\label{sec:agg}

After determining the most feasible stretch factor $\tau^{\star}_{i}$ for each retained anchor $x_i^a$, we proceed to generate the corresponding fingerprints. Specifically, we scale the minimal decision-altering perturbation $\delta_i^{\star}$ by $\tau_i^{\star}$, applying it to the anchor $x_i^a$, and then record the resulting label assigned by the protected model $P$. The final fingerprint set $\mathcal{F}$ is constructed as: $\mathcal{F} = \bigl\{\, (x^{\star}_i, y^{\star}_i) \mid (x^{\star}_i, y^{\star}_i) = \bigl(x^a_i + \tau_i^{\star}\,\delta^{\!*}_i,\;P(x^a_i + \tau_i^{\star}\,\delta^{\!*}_i)\bigr)\bigr\}_{i=1}^{N_f}$. By utilizing multiple fingerprints, our proposed scheme mitigates the potential impact of approximation errors---those introduced during the parameter estimation and relaxation in Step 3---that may affect the verification performance of individual fingerprint, thereby enhancing the overall reliability of the ownership verification.

\subsection{Verification Protocol}
\label{ssec:verify}
Given a suspect model $S$, the model owner verifies ownership by querying it with fingerprints in $\mathcal{F} = \{(x_i^{\star}, y_i^{\star})\}_{i=1}^{N_f}$. Specifically, for a fingerprint $(x_i^{\star}, y_i^{\star}) \in \mathcal{F}$, the model owner queries $S$ using $x_i^{\star}$ and compares the returned label $S(x_i^{\star})$ with $y_i^{\star}$. If $S(x_i^{\star}) = y_i^{\star}$, the model owner counts it as a match. 
After querying all fingerprints, the model owner computes the matching rate $\alpha_S = \frac{1}{N_f} \sum_{i=1}^{N_f} \mathbf{1}\big[S(x_i^{\star}) = y_i^{\star}\big],$ where $\mathbf{1}[\cdot]$ is the indicator function. If $\alpha_S \geq \theta$, where $\theta \in [0, 1]$ is a decision threshold, the suspect model $S$ is classified as pirated; otherwise, it is deemed independent.

\section{Experiments}
\label{sec:experiments}
We conduct extensive experiments to evaluate the effectiveness of AnaFP across three representative types of deep neural networks: convolutional neural network (CNN) models trained on the CIFAR-10 dataset \citep{CIFAR10/100} and the CIFAR-100 dataset \citep{CIFAR10/100}, multilayer perceptron (MLP) models trained on the MNIST dataset \citep{MNIST}, and graph neural network (GNN) models trained on the PROTEINS dataset \citep{PFG05}.

\textbf{Model construction.}
For each task, we designate a protected model and construct two surrogate model sets as well as two testing model sets:
\begin{itemize}[left=0pt]
    \item \textbf{Protected model}: The protected model architectures for CNN, MLP, and GNN are ResNet-18, ResMLP, and GAT.

    \item \textbf{Surrogate model sets}: For each task, we construct a pirated model pool consisting of six models obtained via two simulated performance-preserving modifications of the protected model (fine-tuning and knowledge distillation), and an independent model pool including six models independently trained from scratch with two different architectures and random seeds.

    \item \textbf{Testing model sets}: For each task, we construct a pirated model set via modification attacks, including pruning, fine-tuning, knowledge distillation (KD), adversarial training (AT), N-finetune (injecting noise into weights followed by fine-tuning), and P-finetune (pruning followed by fine-tuning). Each type of model modification produces 20 variants using different random seeds, resulting in 120 pirated models in this set. Besides, we construct an independent model set, and the independently trained models are trained from scratch with varying architectures and seeds, without any access to the protected model and its variants. This set includes 120 independently trained models. 

\end{itemize}
The two testing sets constitute the evaluation benchmark used to test whether AnaFP can effectively discriminate pirated models from independently trained ones. The models in the testing sets used for performance testing are all unseen/unknown during the fingerprint generation process and have no overlap with the models in the surrogate pools.

\textbf{Baselines.}
We compare AnaFP with six representative model fingerprinting approaches: \textbf{UAP} \citep{FDNN22}, which generates universal adversarial perturbations as fingerprints; \textbf{IPGuard} \citep{IPIP21}, which probes near-boundary samples to capture model-specific behavior; \textbf{MarginFinger} \citep{MCGF24}, which controls the distance between a fingerprint and the decision boundary to gain robustness; \textbf{AKH} \citep{QuRD25}, a recently proposed fingerprinting method that uses the protected model's misclassified samples as fingerprints; \textbf{GMFIP} \citep{EMIP25}, a recently-proposed non-adversarial-example-based fingerprinting method that trains a generator to synthesize fingerprint samples; and \textbf{ADV-TRA} \citep{UWSD24}, a fingerprinting method that constructs adversarial trajectories traversing across multiple boundaries.


\begin{table}[h]
\vspace{-5pt}
    \centering
    \setlength{\abovecaptionskip}{4pt} 
    \caption{AUCs achieved by different approaches across DNN models and datasets.}
    \label{tab:main_auc}
    \resizebox{\linewidth}{!}{ 
    \begin{tabular}{lcccc}
        \toprule
        \textbf{Method} & \textbf{CNN (CIFAR-10)} & \textbf{CNN (CIFAR-100)} & \textbf{MLP (MNIST)} & \textbf{GNN (PROTEINS)} \\
        \midrule
        AnaFP (ours)        & \textbf{0.957 $\pm$ 0.002} & \textbf{0.893 $\pm$ 0.005} & \textbf{0.963 $\pm$ 0.002} & \textbf{0.926 $\pm$ 0.005} \\
        UAP                  & 0.850 $\pm$ 0.010 & 0.806 $\pm$ 0.021 & 0.906 $\pm$ 0.004  & --          \\
        IPGuard              & 0.715 $\pm$ 0.075  & 0.725 $\pm$ 0.090 & 0.873 $\pm$ 0.018  & 0.636 $\pm$ 0.067       \\
        MarginFinger         & 0.671 $\pm$ 0.064 & 0.630 $\pm$ 0.072  & 0.653 $\pm$ 0.051  & --         \\
        AKH  & 0.723 $\pm$ 0.016  & 0.802 $\pm$ 0.019 & 0.851 $\pm$ 0.013  & 0.854 $\pm$ 0.021       \\
        ADV-TRA & 0.878 $\pm$ 0.012 & 0.850 $\pm$ 0.024 & 0.887 $\pm$ 0.005 & -- \\
        GMFIP & 0.814 $\pm$ 0.047 & 0.781 $\pm$ 0.075 & 0.892 $\pm$ 0.023 & -- \\
        \bottomrule
    \end{tabular}}
    \vspace{-10pt}
\end{table}

\textbf{Evaluation metric.}
We adopt the Area Under the Receiver Operating Characteristic (ROC) Curve (AUC) as the primary metric. AUC measures the probability that a randomly selected pirated model exhibits a higher fingerprint matching rate than an independently trained model that is randomly selected, thereby quantifying the discriminative capability of the fingerprints across all possible decision thresholds. A higher AUC value indicates stronger discriminative capability, with an AUC of 1.0 representing perfect discrimination between pirated and independently trained models.

\subsection{Effectiveness of Our Design}

\textbf{The discriminative capability of AnaFP.}
To evaluate the discriminative capability of AnaFP, we conduct experiments across multiple types of DNN models (CNN, MLP, and GNN) trained on different datasets (CIFAR-10, CIFAR-100, MNIST, and PROTEINS). Note that UAP and MarginFinger are not evaluated on GNNs as they are designed for data with Euclidean structures, such as images or vectors, and cannot generalize to graph-structured (i.e., non-Euclidean) data.
\begin{wrapfigure}{r}{0.75\textwidth}
  \vspace{-10pt}
  \centering
\captionsetup{belowskip=1pt,aboveskip=2pt}
  \includegraphics[width=0.75\textwidth]{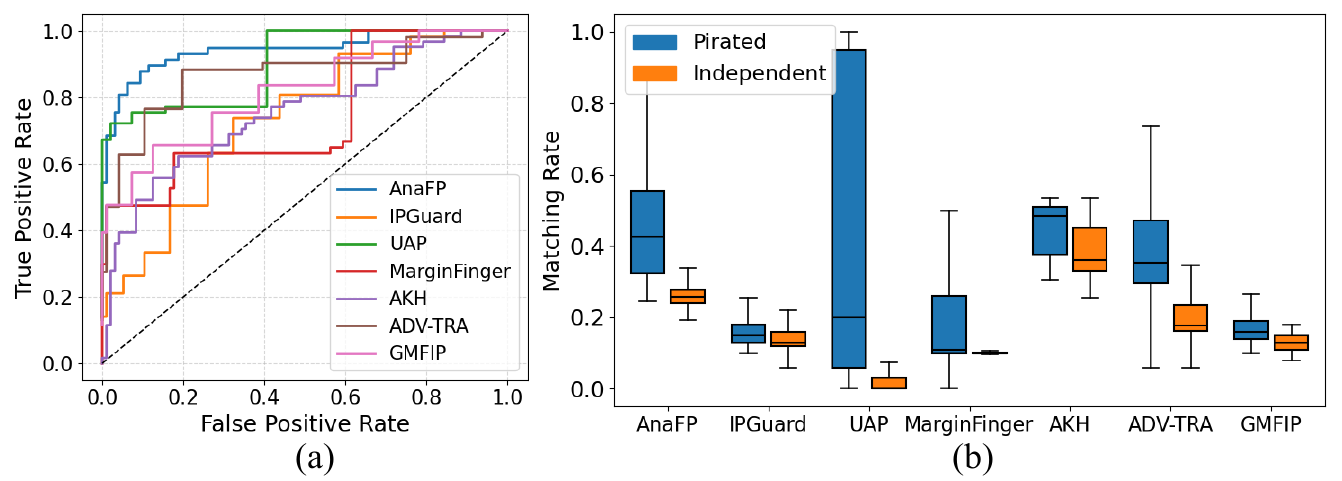}
  \caption{(a) The ROC curve and (b) the matching rate distribution.}
  \label{fig:roc and rate}
  \vspace{-18pt}
\end{wrapfigure}
All experiments were independently run five times to find the mean and standard deviation across the runs. 

The experimental results are summarized in Table~\ref{tab:main_auc}. Specifically, we observe that AnaFP consistently achieves the highest AUCs across all evaluation settings, whereas the baselines exhibit substantial variability across different models and datasets. This observation underscores AnaFP’s superiority in distinguishing pirated models from independently trained ones across diverse DNN model types and data modalities, demonstrating its effectiveness for ownership verification.

To further illustrate AnaFP's discriminative capability, we present both ROC curves and fingerprint matching rate distributions on a representative case with CNN models trained on CIFAR-10. The ROC curve plots the true positive rate (TPR) against the false positive rate (FPR) across varying verification thresholds, providing a comprehensive view of the fingerprint's discriminative ability. A curve closer to the top-left corner indicates stronger separability between pirated and independently trained models. As shown in Figure~\ref{fig:roc and rate}(a), AnaFP achieves the most favorable ROC curve that approaches the top left corner, while the curves of baseline methods lie relatively closer to the diagonal line, indicating weaker discriminative capability. This result further confirms AnaFP’s superior performance in distinguishing pirated and independently trained models. On the other hand, Figure~\ref{fig:roc and rate}(b) complements this analysis by showing the distribution of fingerprint matching rates for both model sets with boxplots. The matching rate reflects the proportion of fingerprints for which a model returns the expected label. A greater separation between the distributions for pirated and independently trained models indicates higher discriminative capability. As depicted in the figure, AnaFP exhibits sharply separated distributions, enabling reliable verification outcomes. In contrast, the baselines show overlapped distributions, indicating ambiguity in their verification decisions.

\begin{table}[t]
    \vspace{-2pt}
    \centering
    \setlength{\tabcolsep}{1.5pt} 
    \setlength{\abovecaptionskip}{4pt} 
    \caption{AUCs under various performance-preserving model modifications.}
    \label{tab:auc across attacks}
    \resizebox{\textwidth}{!}{
    \begin{tabular}{l|c|c|c|c|c|c|c}
        \toprule
        Method & Pruning & Fine-tuning & KD & AT & N-finetune & P-finetune & Prune-KD \\
        \midrule
        AnaFP (ours) & \textbf{1.000 $\pm$ 0.000} & \textbf{0.979 $\pm$ 0.008} & \textbf{0.756 $\pm$ 0.023} & \textbf{0.983 $\pm$ 0.015} & \textbf{0.978 $\pm$ 0.010} & \textbf{0.989 $\pm$ 0.007} & \textbf{0.689 $\pm$ 0.031} \\
        UAP & \textbf{1.000 $\pm$ 0.000} & 0.868 $\pm$ 0.023 & 0.679 $\pm$ 0.013 & 0.783 $\pm$ 0.041 & 0.870 $\pm$ 0.021 & 0.876 $\pm$ 0.022 & 0.625 $\pm$ 0.026\\
        IPGuard & 0.999 $\pm$ 0.002 & 0.741 $\pm$ 0.113 & 0.559 $\pm$ 0.105 & 0.616 $\pm$ 0.026 & 0.679 $\pm$ 0.104 & 0.671 $\pm$ 0.106 & 0.596 $\pm$ 0.079 \\
        MarginFinger & \textbf{1.000 $\pm$ 0.000} & 0.658 $\pm$ 0.119 & 0.554 $\pm$ 0.083 & 0.672 $\pm$ 0.087 & 0.586 $\pm$ 0.064 & 0.638 $\pm$ 0.115 & 0.543 $\pm$ 0.088 \\
        AKH & 0.999 $\pm$ 0.001 & 0.848 $\pm$ 0.016 & 0.616 $\pm$ 0.054 & 0.627 $\pm$ 0.122 & 0.716 $\pm$ 0.039 & 0.701 $\pm$ 0.052 & 0.636 $\pm$ 0.047 \\
        ADV-TRA & \textbf{1.000 $\pm$ 0.000} & 0.923 $\pm$ 0.010 & 0.660 $\pm$ 0.025 & 0.742 $\pm$ 0.036 & 0.895 $\pm$ 0.032 & 0.857 $\pm$ 0.013 & 0.633 $\pm$ 0.035 \\
        GMFIP & 0.999 $\pm$ 0.001 & 0.779 $\pm$ 0.054 & 0.591 $\pm$ 0.035 & 0.711 $\pm$ 0.084 & 0.858 $\pm$ 0.031 & 0.819 $\pm$ 0.061 & 0.601 $\pm$ 0.068 \\
        \bottomrule
    \end{tabular}}
    \vspace{-15pt}
\end{table}
\textbf{The robustness to performance-preserving model modification attacks}.
We evaluate AnaFP's robustness to performance-preserving model modifications using seven representative attacks: pruning, fine-tuning, KD, AT, and three composite attacks (N-finetune, P-finetune, and Prune-KD). Table~\ref{tab:auc across attacks} shows AUCs obtained by AnaFP and the six baselines under each attack with CNN models. Under the pruning attack, all methods achieve near-perfect performance (AUC $\geq$ 0.999). However, AnaFP consistently outperforms the baselines under the remaining attacks, including fine-tuning, KD, AT, N-finetune, P-finetune, and Prune-KD. Notably, AnaFP achieves mean AUCs of 0.979, 0.983, 0.978, and 0.989 for fine-tuning, AT, N-finetune, and P-finetune, respectively, showing strong resistance to these attacks. Although KD and Prune-KD present a unique challenge by distilling knowledge without retaining internal structures and weights, AnaFP still maintains a leading AUC of 0.756 and 0.689, surpassing all baselines. These results collectively demonstrate AnaFP’s robustness against a wide range of model modification attacks.

\subsection{Sensitivity to Design Choices}

\subsubsection{The Impact of Surrogate Model Pool}
\label{exp:sensitivity to surrogate}

To evaluate the robustness of AnaFP under different surrogate pool configurations, we examine two key factors: \emph{pool size} and \emph{pool diversity}.

\textbf{Sensitivity to pool size.}
We examine the impact of surrogate pool size on verification performance by varying the number of models in each pool, considering configurations with 2, 4, 6, 8, and 10 models. As shown by the line plot in Figure~\ref{fig:sensitivity-surrogate}, the AUC stabilizes rapidly once the pool size exceeds 6 models. This finding indicates that AnaFP achieves stable and reliable verification performance with only a modest number of surrogate models, indicating robustness to pool size.

\textbf{Sensitivity to pool diversity.}
We further analyze the effect of surrogate set diversity on verification performance by constructing pools with varying levels of model diversity. In the \emph{low-diversity} setting, the pirated model pool consists solely of fine-tuned variants of the protected model, and the independent model pool contains independently trained models sharing the same architecture as the protected model. The \emph{medium-diversity} setting expands the pirated pool diversity to include both fine-tuned and knowledge-distilled models, while the independent pool comprises models with two distinct architectures. 
\begin{wrapfigure}{r}{0.45\textwidth}
  \vspace{-15pt}
  \centering
  \captionsetup{belowskip=1pt,aboveskip=2pt}
  \includegraphics[width=0.45\textwidth]{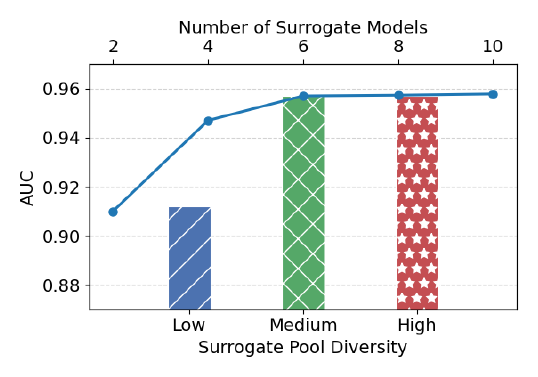}
  \caption{Line plot (top axis): effect of surrogate pool size; Bar plot (bottom axis): effect of surrogate pool diversity.}
  \label{fig:sensitivity-surrogate}
  \vspace{-18pt}
\end{wrapfigure}
In the \emph{high-diversity} setting, we further incorporate noise-finetuned pirated variants into the pirated pool and independently trained models with three different architectures in the independent pool. As shown in the bar plot of Figure~\ref{fig:sensitivity-surrogate}, increasing diversity generally improves AUC. However, the performance gain from medium and high diversity is marginal, indicating diminishing returns. Importantly, even in the low-diversity setting, AnaFP maintains a strong AUC of 0.912, demonstrating its robustness to variations in surrogate pool diversity.

In summary, these results show that AnaFP is insensitive to the specific configurations of surrogate pools. Reliable ownership verification can be achieved using a modest number of surrogate models and without requiring extensive diversity, thereby underscoring AnaFP's practicality and generalizability.

\subsubsection{The Impact of the Quantile Threshold}
\label{exp:impact of quantile}

To assess how the quantile thresholds affect verification performance, we perform a sensitivity study on $q_{\mathrm{margin}}$ and $q_{\mathrm{lip}}$ while fixing $q_{\mathrm{eps}}=1.0$, as its impact was found to be empirically negligible. The results are summarized in Table~\ref{tab:quantile_sensitivity}. Overly strict thresholds (e.g., $0.1/0.9$ and $0.2/0.8$) result in no valid fingerprints, making ownership verification impossible. Even with a slightly relaxed threshold like $0.3/0.7$, the small number of valid fingerprints remains limited (only 11), leading to a low AUC of 0.848. As the thresholds are moderately relaxed, the number of valid fingerprints increases, and verification performance quickly stabilizes. Notably, starting from the $0.5/0.5$ configuration, all subsequent threshold pairs yield AUC values exceeding 0.950, indicating that once a sufficiently large and reliable fingerprint set is established, the verification performance becomes robust to further threshold variations. However, excessive relaxation introduces a substantial number of low-quality fingerprints, ultimately degrading the discriminative capability (e.g., achieving the AUC of 0.896 at 0.9/0.1). Overall, these findings indicate that AnaFP's verification performance is not highly sensitive to the precise choice of quantile thresholds. As long as the thresholds are moderately relaxed, a high-quality fingerprint pool can be obtained without extensive tuning.

\begin{table}[t]
\vspace{-20pt}
\centering
\setlength{\abovecaptionskip}{4pt} 
\caption{Effect of the quantile thresholds.}
\label{tab:quantile_sensitivity}
\resizebox{0.85\textwidth}{!}{
\begin{tabular}{lccccccccc}
\toprule
\textbf{$q_{\mathrm{margin}}/ q_{\mathrm{lip}}$} & $0.1/0.9$ & $0.2/0.8$ & $0.3/0.7$ &
$0.4/0.6$ & $0.5/0.5$ & $0.6/0.4$ & $0.7/0.3$ & $0.8/0.2$ & $0.9/0.1$\\
\midrule
Valid Fingerprints & 0 & 0 & 11 & 33  & 56 & 103 & 228 & 428 & 1158 \\
AUC          & -- & -- & 0.848 & 0.906 & 0.957 & 0.959 & 0.951 & 0.938 & 0.896 \\
\bottomrule
\end{tabular}}
\vspace{-10pt}
\end{table}

\subsection{Ablation Study of the Stretch Factor $\tau$}

\begin{wrapfigure}{r}{0.48\textwidth}
  \vspace{-12pt}
  \centering
  \captionsetup{belowskip=1pt,aboveskip=2pt}
  \includegraphics[width=0.48\textwidth]{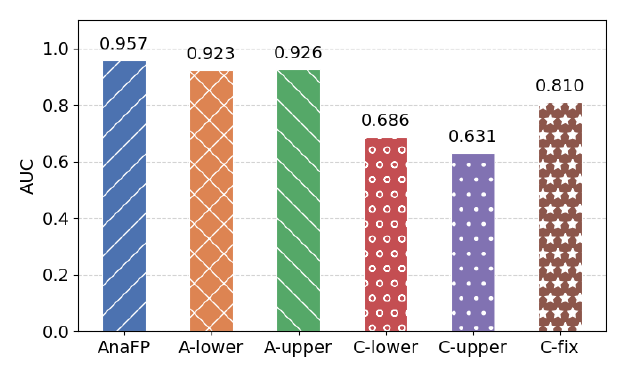}
  \caption{The AUCs achieved by AnaFP and three variants.}
  \label{fig:ablation-barplot}
  \vspace{-10pt}
\end{wrapfigure}

To assess the impact of the stretch factor $\tau$ on verification performance, we conduct an ablation study using five variants of AnaFP: A-lower (selecting $\tau$ from the feasible set $\mathcal{T}_{\mathrm{feas}}$ that is closest to the lower bound), A-upper (selecting $\tau$ from the feasible set $\mathcal{T}_{\mathrm{feas}}$ that is closest to the upper bound), C-fix (applying a fixed prediction margin across all fingerprints by optimizing $\tau$ to enforce the same difference between the probabilities of the first and second-highest predicted classes), C-lower (fixing $\tau$ to the lower bound $\tau_{\text{lower}}$ to solely enforce robustness), and C-upper (setting $\tau$ to the upper bound $\tau_{\text{upper}}$ to solely enforce uniqueness). 

Figure~\ref{fig:ablation-barplot} shows the AUC values achieved by AnaFP and its variants. Specifically, AnaFP consistently outperforms all five variants, validating the importance of searching for a feasible $\tau$ within its admissible interval. Among the variants, C-fix performs better than both C-lower and C-upper, as it seeks a balance between robustness and uniqueness using a uniform margin. In contrast, C-lower and C-upper impose single constraints across all fingerprints—either robustness or uniqueness—without verifying whether both constraints are satisfied. Consequently, many infeasible $\tau$s are retained, degrading the overall discriminative capability. However, C-fix lacks the flexibility to adapt its margin to the local decision geometry of individual samples. On the other hand, AnaFP, A-lower, and A-upper discard infeasible $\tau$, thereby maintaining high verification reliability. While A-lower and A-upper exhibit slightly reduced performance, primarily due to the lower tolerance of their selected feasible $\tau$ to parameter-estimation errors, they still outperform C-fix, C-lower, and C-upper. These results underscore the importance of selecting $\tau$ that is within the admissible interval.

\section{Conclusions}

We study the problem of constructing effective fingerprints for adversarial-example-based model fingerprinting by ensuring simultaneous satisfaction of both robustness and uniqueness constraints. To this end, we propose AnaFP, an analytical fingerprinting scheme that mathematically formalizes the two constraints and theoretically derives an admissible interval, which is defined by the lower and upper bounds for the robustness and uniqueness constraints, for the stretch factor used in fingerprint construction. To enable practical fingerprint generation, AnaFP approximates the original model sets using two finite surrogate model pools and employs a quantile-based relaxation strategy to relax the derived bounds. Particularly, due to the circular dependency between the lower bound and the stretch factor, a grid search strategy is exploited to determine the most feasible stretch factor. Extensive experiments across diverse DNN architectures and datasets demonstrate that AnaFP achieves effective and reliable ownership verification, even under various model modification attacks, and consistently outperforms prior adversarial-example-based fingerprinting methods.

\bibliography{iclr2026_conference}
\bibliographystyle{iclr2026_conference}

\newpage
\appendix
\section*{Appendix}

\section{Theoretical Derivation of the Lower and Upper Bounds}
\label{sec:theory}
 
In Section~\ref{sec:tau}, we introduced two theoretical constraints: a robustness constraint that defines a lower bound for $\tau$ and a uniqueness constraint that defines an upper bound for $\tau$. In this section, we present formal derivations of both bounds.

First, we provide Lemma \ref{lem:robust} about the lower bound for the robustness as follows.


\begin{lemma}[Lower bound for robustness]
\label{lem:robust}
For the input of a fingerprint $\bm x^{\star} = \bm x^{a} + \tau \bm\delta^{\!*}$, the prediction of an arbitrary pirated model $P' \in \mathcal{V}_{P}$ is $P'(\bm x^{\star}) = y^{\star}$, provided that $\tau > 1+\frac{2\,\epsilon_{\text{logit}}}{c_g\,\lVert\bm\delta^{\!*}\rVert_2}$.
\end{lemma}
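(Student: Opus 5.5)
The plan is to reduce the statement to a margin comparison between $P$ and $P'$ at $x^{\star}$ in three moves. First, a first-order (linearization) argument for the margin of $P$ along the ray $x^a+t\delta^{\!*}$ beyond the boundary point $q$. Second, the uniform logit-shift bound $\epsilon_{\text{logit}}$, which controls how much any margin can change under a modification. Third, the observation that a margin exceeding $2\epsilon_{\text{logit}}$ survives that change.

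Since $P(x^{\star})=y^{\star}$, define the margin of $P$ at $x^{\star}$ relative to the fingerprint label:
\[
\mu_P(x)=s_{P,y^{\star}}(x)-\max_{k\neq y^{\star}}s_{P,k}(x).
\]
Define $\mu_{P'}$ analogously. The goal is $\mu_{P'}(x^{\star})>0$, which gives $P'(x^{\star})=y^{\star}$.

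\textbf{Margin loss under modification.} For any $k$, the bound $|s_{P',k}-s_{P,k}|\le\epsilon_{\text{logit}}$ gives
\[
s_{P',y^{\star}}-s_{P',k}\ge s_{P,y^{\star}}-s_{P,k}-2\epsilon_{\text{logit}}.
\]
Taking the minimum over $k\neq y^{\star}$ yields $\mu_{P'}(x^{\star})\ge\mu_P(x^{\star})-2\epsilon_{\text{logit}}$. It therefore suffices to show $\mu_P(x^{\star})>2\epsilon_{\text{logit}}$. This accounts for the factor $2$ in the bound.

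\textbf{Margin growth of $P$ past the boundary.} The boundary point $q=x^a+\delta^{\!*}$ satisfies $g_P(q)=0$. I would work in the local regime where the relevant competing class near $q$ is $y^{\star}$, so that beyond $q$ the margin $\mu_P$ coincides with $-g_P$. Linearizing at $q$ along the step $(\tau-1)\delta^{\!*}$ gives
\[
\mu_P(x^{\star})\approx-\nabla g_P(q)^\top(\tau-1)\delta^{\!*}.
\]
Here I need the fact that $\delta^{\!*}$ is the minimal-$\ell_2$ perturbation of Equation~(\ref{eq:minpert}). By the KKT conditions, $\delta^{\!*}$ is parallel to $-\nabla g_P(q)$. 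Hence
\[
-\nabla g_P(q)^\top\delta^{\!*}=c_g\|\delta^{\!*}\|_2
\quad\text{and}\quad
\mu_P(x^{\star})\approx(\tau-1)\,c_g\|\delta^{\!*}\|_2.
\]
Requiring this to exceed $2\epsilon_{\text{logit}}$ is exactly
\[
\tau>1+\frac{2\epsilon_{\text{logit}}}{c_g\|\delta^{\!*}\|_2}.
\]

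\textbf{Main obstacle.} The delicate step is justifying the linearization, i.e., that $g_P$ behaves affinely along the segment from $q$ to $x^{\star}$. Equivalently, the directional derivative of $-g_P$ along $\delta^{\!*}/\|\delta^{\!*}\|$ must stay at least $c_g$ on that segment, and the top competing class must remain $y^{\star}$. I would make this an explicit local-linearity (or monotone-growth) assumption on $P$ in the region between $q$ and $x^{\star}$, which is consistent with the paper's first-order treatment. Under a weaker curvature bound, one can add a second-order remainder term that makes the threshold slightly larger. The parallelism of $\delta^{\!*}$ and $\nabla g_P(q)$ also relies on $q$ being an exact minimizer where $g_P$ is differentiable; I would state this as an assumption, since C\&W only approximates it.
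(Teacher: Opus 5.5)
Your proposal follows the paper's proof. The logit-shift bound reduces the claim to a margin of $P$ at $x^{\star}$ exceeding $2\epsilon_{\text{logit}}$, and a first-order expansion at $q$ plus the KKT colinearity $\delta^{\!*}=-\lambda\nabla g_P(q)$ turns that into the stated threshold on $\tau$. The linearization is an approximation in your version just as in the paper. Your treatment is more careful than the paper's, which compares $y^{\star}$ only against $y$ and writes $g_P(x^{\star})$ where its own sign convention requires $-g_P(x^{\star})$. One gap in your local assumption: for $\mu_P=-g_P$ you also need $y$ to stay the runner-up to $y^{\star}$, meaning no third class overtakes $y$ between $q$ and $x^{\star}$.
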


\begin{proof}
To guarantee $P'(\bm x^{\star}) = y^{\star}$, we have $s_{P', y^{\star}}(\bm x^{\star}) > s_{P', y}(\bm x^{\star})$.
For any pirated model $P'$, its logit shift $\epsilon_{\text{logit}}$ is bounded by $|s_{P',k}(\bm x^{\star}) - s_{P,k}(\bm x^{\star})| \le \epsilon_{\text{logit}}$, $\forall k \in \mathcal{Y}$. Following this, for a label of the fingerprint $y^{\star}$ and the original label $y$, we have
\[
s_{P', y^{\star}}(\bm x^{\star}) \ge s_{P, y^{\star}}(\bm x^{\star}) - \epsilon_{\text{logit}}, \qquad
s_{P', y}(\bm x^{\star}) \le s_{P, y}(\bm x^{\star}) + \epsilon_{\text{logit}}.
\]
Subtracting them yields $s_{P', y^{\star}}(\bm x^{\star}) - s_{P', y}(\bm x^{\star})
\ge g_P(\bm x^{\star}) - 2 \epsilon_{\text{logit}}$.
Hence, the robustness is guaranteed if the margin of the protected model satisfies
\begin{equation}
g_P(\bm x^{\star}) > 2 \epsilon_{\text{logit}}.
\label{eq:protected model margin}
\end{equation}

We perform the first-order Taylor expansion at the boundary point $\bm q = \bm x^{a} + \bm\delta^{\!*}$, where $g_P(\bm q) = 0$. Letting $\bm x^{\star} = \bm q + (\tau - 1) \bm\delta^{\!*}$, we approximate $g_P(\bm x^{\star}) \approx (\tau - 1)\, \bm\delta^{\!*}{}^{\top} \nabla g_P(\bm q)$.
Since $\bm\delta^{\!*}$ is the optimal direction that minimizes $\|\bm\delta\|_2$ while satisfying $g_P(\bm x^{a} + \bm\delta) = 0$, the KKT condition yields $\bm\delta^{\!*} = -\lambda \nabla g_P(\bm q)$ for some $\lambda > 0$, implying colinearity. Therefore, we have $g_P(\bm x^{\star}) \approx (\tau - 1)\, c_g \|\bm\delta^{\!*}\|_2$. Plugging $g_P(\bm x^{\star})$ into Equation \eqref{eq:protected model margin}, we find $(\tau - 1)\, c_g \|\bm\delta^{\!*}\|_2 > 2 \epsilon_{\text{logit}}$, i.e., $\tau > 1 + \frac{2 \epsilon_{\text{logit}}}{c_g \|\bm\delta^{\!*}\|_2}$, which concludes the proof.
\end{proof}

Next, we state Lemma \ref{lem:specific} about the upper bound for the uniqueness as follows.
\begin{lemma}[Upper bound for uniqueness]
\label{lem:specific}
For the input of a fingerprint $\bm x^{\star} = \bm x^{a} + \tau \bm\delta^{\!*}$, the original label $y$ of the anchor is preserved for $\forall I \in \mathcal{I}_{P}$, provided that $\tau < \frac{m_{\min}}{2\,L_{\mathrm{uniq}}\,\Vert\bm\delta^{\!*}\Vert_2}$.
\end{lemma}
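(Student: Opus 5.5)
The plan mirrors Lemma~\ref{lem:robust}, now working with an independent model $I \in \mathcal{I}_P$ instead of a pirated one. We need to show that the anchor label $y$ still has the largest logit of $I$ at $\bm x^{\star}$, i.e.
\[
s_{I,y}(\bm x^{\star}) - s_{I,k}(\bm x^{\star}) > 0 \quad \text{for every } k \neq y .
\]
We start from the margin at the anchor, which satisfies $s_{I,y}(\bm x^{a}) - s_{I,k}(\bm x^{a}) \ge m_{\min}$ for all $k \neq y$ by the definition of $m_{\min}$. It then suffices to show that moving from $\bm x^{a}$ to $\bm x^{\star}$ lowers this pairwise margin by strictly less than $m_{\min}$.

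Next, we bound each coordinate of the logit change. Set $\Delta_k = s_{I,k}(\bm x^{\star}) - s_{I,k}(\bm x^{a})$. Since $|\Delta_k| \le \|s_I(\bm x^{\star}) - s_I(\bm x^{a})\|_2$ for every $k$, the definition of $L_{\mathrm{uniq}}$ gives
\[
|\Delta_k| \le L_{\mathrm{uniq}} \|\bm x^{\star} - \bm x^{a}\|_2 = L_{\mathrm{uniq}}\, \tau \|\bm\delta^{\!*}\|_2 ,
\]
where we used $\tau > 0$. As in Lemma~\ref{lem:robust}, we treat the pair of competing classes $y$ and $k$ separately, shifting one down and the other up by the worst case. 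This yields
\[
s_{I,y}(\bm x^{\star}) - s_{I,k}(\bm x^{\star}) \ge m_{\min} - |\Delta_y| - |\Delta_k| \ge m_{\min} - 2 L_{\mathrm{uniq}}\, \tau \|\bm\delta^{\!*}\|_2 .
\]

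To finish, we substitute the hypothesis $\tau < m_{\min}/(2 L_{\mathrm{uniq}} \|\bm\delta^{\!*}\|_2)$. This makes the right-hand side strictly positive for every $k \neq y$, so $I(\bm x^{\star}) = y$.

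It remains to connect this to uniqueness. Because $\tau > 1$, the protected model satisfies $P(\bm x^{\star}) = y^{\star} \neq y$, so we obtain $I(\bm x^{\star}) \neq y^{\star}$.

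Unlike Lemma~\ref{lem:robust}, this argument needs no Taylor approximation, so it is an exact bound under the stated definitions. The main subtlety is the factor $2$. The $\ell_2$ Lipschitz bound actually gives the sharper estimate $|\Delta_y - \Delta_k| \le \sqrt{2}\, L_{\mathrm{uniq}} \tau \|\bm\delta^{\!*}\|_2$. We deliberately use the cruder coordinate-wise estimate, because it reproduces the stated constant and is still valid.

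We also need to check that $L_{\mathrm{uniq}}$ is understood as a bound for the specific pair $(\bm x^{a}, \bm x^{\star})$. Since $\bm x^{\star}$ depends on $\tau$, we interpret $L_{\mathrm{uniq}}$ as a uniform bound over the segment from $\bm x^{a}$ to $\bm x^{\star}$, which is how the region is described in the definition.

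Finally, we note an implicit requirement: the argument needs $m_{\min} > 0$. If $m_{\min} \le 0$, the hypothesis on $\tau$ cannot hold for any $\tau > 0$, so in that case the statement is vacuous.
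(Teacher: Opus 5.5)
Your argument is correct and matches the paper's proof. The paper picks the runner-up class $v$ at $\bm x^{\star}$ and uses $s_{I,v}(\bm x^{a}) \le s_{I,k^{\dagger}}(\bm x^{a})$ before applying the same coordinatewise Lipschitz bound; this is equivalent to your bounding every pairwise margin $s_{I,y}-s_{I,k}$ at once. One caveat: your closing step to $I(\bm x^{\star}) \neq y^{\star}$ is not needed for this lemma and is not justified in general, because $\tau > 1$ alone does not ensure $P(\bm x^{\star}) \neq y$ for a nonlinear $P$ (it holds only in the first-order picture of Lemma~\ref{lem:robust}), so drop it or mark it as heuristic.
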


\begin{proof}
We first fix an arbitrary independently trained model $I$ with Lipschitz constant $L_I \le L_{\text{uniq}}$. Let $\Delta = \tau \bm\delta^{\!*}$, and $k^{\dagger} = \arg\max_{k \ne y} s_{I,k}(\bm x^{a})$ be the runner-up class. Then, the margin at the anchor is $g_I(\bm x^{a}) = s_{I,y}(\bm x^{a}) - s_{I,k^{\dagger}}(\bm x^{a}) \ge m_{\min}$.
Now, setting $v = \arg\max_{k \ne y} s_{I,k}(\bm x^{\star})$ and considering the margin at $\bm x^{\star}$, we have
\begin{equation}
g_I(\bm x^{\star}) = s_{I,y}(\bm x^{\star}) - s_{I,v}(\bm x^{\star}).
\label{eq:margin_safe_exact}
\end{equation}
Because $L_{\text{uniq}}$ is an upper bound on the local Lipschitz constant across all independently trained models, we have $\bigl\|s_I(\bm x^{\star}) - s_I(\bm x^{a})\bigr\|_2 \le L_{\mathrm{uniq}} \|\Delta\|_2 = L_{\mathrm{uniq}} \tau \|\bm\delta^{\!*}\|_2$, for $\forall I \in \mathcal{I}_{P}$.
Each individual coordinate satisfies $|s_{I,k}(x^{\star}) - s_{I,k}(x^{a})| \le \|s_I(x^{\star}) - s_I(x^{a})\|_2$. Thus, we have $|s_{I,k}(\bm x^{\star}) - s_{I,k}(\bm x^{a})| \le L_{\mathrm{uniq}} \tau \|\bm\delta^{\!*}\|_2$.
Following this equation, we have the following two inequalities:
\begin{align}
s_{I,y}(\bm x^{\star}) &\ge s_{I,y}(\bm x^{a}) - L_{\mathrm{uniq}} \tau \|\bm\delta^{\!*}\|_2, \label{eq:winner_bound}\\
s_{I,v}(\bm x^{\star}) &\le s_{I,v}(\bm x^{a}) + L_{\mathrm{uniq}} \tau \|\bm\delta^{\!*}\|_2. \label{eq:runner_bound}
\end{align}

Subtracting \eqref{eq:runner_bound} from \eqref{eq:winner_bound} and inserting the result into \eqref{eq:margin_safe_exact}, we have
\begin{align}
g_I(\bm x^{\star}) &\ge \left[s_{I,y}(\bm x^{a}) - s_{I,v}(\bm x^{a})\right] - 2 L_{\mathrm{uniq}} \tau \|\bm\delta^{\!*}\|_2 \nonumber \\
&\ge \left[s_{I,y}(\bm x^{a}) - s_{I,k^{\dagger}}(\bm x^{a})\right] - 2 L_{\mathrm{uniq}} \tau \|\bm\delta^{\!*}\|_2 \nonumber \quad\text{(since $s_{I,v}(x^{a}) \le s_{I,k^{\dagger}}(x^{a})$)}\\
&= g_I(\bm x^{a}) - 2 L_{\mathrm{uniq}} \tau \|\bm\delta^{\!*}\|_2.
\label{eq:margin_bound_pre}
\end{align}

Preserving the original label $y$ at $\bm x^{\star}$ requires $g_I(\bm x^{\star}) > 0$, which implies $\tau < \frac{g_I(\bm x^{a})}{2 L_{\mathrm{uniq}} \|\bm\delta^{\!*}\|_2}$.
Since $g_I(\bm x^{a}) \ge m_{\min}$, we obtain $g_I(\bm x^{\star}) > 0$ for any $\tau < \frac{m_{\min}}{2 L_{\mathrm{uniq}} \|\bm\delta^{\!*}\|_2}$, which concludes the proof.
\end{proof}

Following Lemmas \ref{lem:robust} and \ref{lem:specific}, we establish Theorem \ref{thm:dual} about the robustness and uniqueness.
\begin{theorem}[Admissible interval for $\tau$]
\label{thm:dual}
For $\forall P' \in \mathcal{V}_{P}$ and $\forall I \in \mathcal{I}_{P}$, any fingerprint $(\bm x^{\star}, y^{\star})$ constructed with a feasible stretch factor $\tau^{\star}$ is guaranteed to be both robust and unique if
$1 + \frac{2\,\epsilon_{\text{logit}}}{c_g\,\|\bm\delta^{\!*}\|_2}
\;<\; \tau \;<\;
\frac{m_{\min}}{2\,L_{\text{uniq}}\,\|\bm\delta^{\!*}\|_2}$.
\end{theorem}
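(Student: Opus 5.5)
The plan is to combine Lemma~\ref{lem:robust} and Lemma~\ref{lem:specific} directly, since the two inequalities in the hypothesis of Theorem~\ref{thm:dual} are exactly their respective hypotheses. Fix $\tau$ with $1 + \frac{2\epsilon_{\text{logit}}}{c_g\|\bm\delta^{\!*}\|_2} < \tau < \frac{m_{\min}}{2L_{\text{uniq}}\|\bm\delta^{\!*}\|_2}$. Take $\bm x^{\star} = \bm x^{a} + \tau\bm\delta^{\!*}$ and $y^{\star} = P(\bm x^{\star})$, as in Step~4.

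\textbf{Robustness.} The left inequality is precisely the hypothesis of Lemma~\ref{lem:robust}. Hence for every $P' \in \mathcal{V}_P$ we get $P'(\bm x^{\star}) = y^{\star}$. This is the robustness property of Section~\ref{sec:problem}.

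\textbf{Uniqueness.} The right inequality is the hypothesis of Lemma~\ref{lem:specific}. Hence every $I \in \mathcal{I}_P$ satisfies $g_I(\bm x^{\star}) > 0$, i.e.\ $I(\bm x^{\star}) = y$. To conclude $I(\bm x^{\star}) \ne y^{\star}$, I must show $y^{\star} \ne y$. For this I use Lemma~\ref{lem:robust} with $P' = P$: since $P$ belongs to $\mathcal{V}_P$ (the paper puts $P$ in the surrogate pirated pool, and its logit shift is trivially $0 \le \epsilon_{\text{logit}}$), the proof of that lemma yields $g_P(\bm x^{\star}) > 2\epsilon_{\text{logit}} \ge 0$. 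Here I read $g_P(\bm x^{\star})$ as $s_{P,y^{\star}} - s_{P,y}$, so the margin at $\bm x^{\star}$ favours $y^{\star}$ over the anchor label $y$, and therefore $y^{\star} \ne y$. Combining the two facts gives $I(\bm x^{\star}) = y \ne y^{\star}$ for all $I \in \mathcal{I}_P$.

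\textbf{Main obstacle.} The only delicate point is this label-distinctness step. The paper's Lemma~\ref{lem:robust} relies on a first-order Taylor approximation at $\bm q$ and on the KKT colinearity $\bm\delta^{\!*} = -\lambda\nabla g_P(\bm q)$. So the theorem inherits that approximation, and I would state it explicitly as being "under the first-order model of Lemma~\ref{lem:robust}." I would also note that both bounds are uniform in $P'$ and $I$, because $\epsilon_{\text{logit}}$, $m_{\min}$ and $L_{\text{uniq}}$ are defined as suprema or infima over the sets. The intersection of the two conditions therefore gives both properties at once, and the interval is nonempty exactly when $\tau_{\text{lower}} < \tau_{\text{upper}}$.
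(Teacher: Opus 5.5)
Your proposal is correct and follows the paper's approach: the paper obtains Theorem~\ref{thm:dual} simply by intersecting the hypotheses of Lemma~\ref{lem:robust} (robustness for every $P' \in \mathcal{V}_P$) and Lemma~\ref{lem:specific} (every $I \in \mathcal{I}_P$ keeps the anchor label $y$). Your explicit check that $y^{\star} \ne y$, which turns $I(\bm x^{\star}) = y$ into the uniqueness condition $I(\bm x^{\star}) \ne y^{\star}$, is a step the paper leaves implicit; it already follows from the first-order estimate $g_P(\bm x^{\star}) \approx (\tau - 1)\, c_g \|\bm\delta^{\!*}\|_2 > 0$ in the proof of Lemma~\ref{lem:robust}, which needs only $\tau > 1$, so you do not need $P \in \mathcal{V}_P$ (the paper only places $P$ in the surrogate pool $\mathcal{V}_P^s$).
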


\section{Supplemental Experimental Results}

\subsection{Robustness to Model Modification attacks}

\begin{table}[bp]
    \centering
    \setlength{\tabcolsep}{1.5pt}
    \caption{AUCs under various performance-preserving model modifications on CNN models trained on CIFAR100.}
    \label{tab:AUC across attacks on CIFAR 100}
    \resizebox{\textwidth}{!}{
    \begin{tabular}{l|c|c|c|c|c|c|c}
        \toprule
        Method & Pruning & Finetuning & KD & AT & N-finetune & P-finetune & Prune-KD \\
        \midrule
        AnaFP (ours) 
        & \textbf{0.963 $\pm$ 0.002} & \textbf{0.954 $\pm$ 0.004} & 0.596 $\pm$ 0.009 & \textbf{0.930 $\pm$ 0.007} & \textbf{0.957 $\pm$ 0.005} & \textbf{0.956 $\pm$ 0.004} & {0.574 $\pm$ 0.013} \\
        UAP 
        & 0.897 $\pm$ 0.015 & 0.852 $\pm$ 0.022 & \textbf{0.598 $\pm$ 0.018} & 0.802 $\pm$ 0.027 & 0.852 $\pm$ 0.022 & 0.859 $\pm$ 0.021 & {0.583 $\pm$ 0.019} \\
        IPGuard 
        & 0.845 $\pm$ 0.086 & 0.811 $\pm$ 0.092 & 0.461 $\pm$ 0.096 & 0.721 $\pm$ 0.080 & 0.763 $\pm$ 0.088 & 0.750 $\pm$ 0.098 & {0.496 $\pm$ 0.078} \\
        MarginFinger 
        & 0.737 $\pm$ 0.076 & 0.667 $\pm$ 0.092 & 0.519 $\pm$ 0.026 & 0.679 $\pm$ 0.092 & 0.653 $\pm$ 0.089 & 0.553 $\pm$ 0.058 & {0.488 $\pm$ 0.026} \\
        AKH
        & 0.915 $\pm$ 0.007 & 0.838 $\pm$ 0.032 & 0.592 $\pm$ 0.034 & 0.762 $\pm$ 0.076 & 0.829 $\pm$ 0.040 & 0.878 $\pm$ 0.044 & \textbf{0.601 $\pm$ 0.041} \\
        ADV-TRA
        & 0.953 $\pm$ 0.007 & 0.904 $\pm$ 0.024 & 0.561 $\pm$ 0.011 & 0.886 $\pm$ 0.035 & 0.901 $\pm$ 0.017 & 0.898 $\pm$ 0.031 & {0.544 $\pm$ 0.030} \\
        GMFIP
        & 0.885 $\pm$ 0.048 & 0.794 $\pm$ 0.076 & 0.548 $\pm$ 0.065 & 0.721 $\pm$ 0.078 & 0.877 $\pm$ 0.090 & 0.902 $\pm$ 0.064 & {0.516 $\pm$ 0.072} \\
        \bottomrule
    \end{tabular}}
\end{table}

\begin{table}[t]
    \centering
    \setlength{\tabcolsep}{1.5pt}
    \caption{AUCs under various performance-preserving model modifications on MLP models.}
    \label{apptab:auc across attacks on MLP}
    \resizebox{\textwidth}{!}{
    \begin{tabular}{l|c|c|c|c|c|c|c}
        \toprule
        Method & Pruning & Finetuning & KD & AT & N-finetune & P-finetune & Prune-KD \\
        \midrule
        AnaFP (ours) & \textbf{1.000 $\pm$ 0.000} & \textbf{1.000 $\pm$ 0.000} & \textbf{0.792 $\pm$ 0.002} & \textbf{0.987 $\pm$ 0.005} & \textbf{1.000 $\pm$ 0.000} & \textbf{0.999 $\pm$ 0.000} & \textbf{0.788 $\pm$ 0.006} \\
        UAP & {0.981 $\pm$ 0.001} & 0.969 $\pm$ 0.006 & {0.635 $\pm$ 0.002} & 0.931 $\pm$ 0.010 & 0.968 $\pm$ 0.007 & 0.976 $\pm$ 0.003 & {0.610 $\pm$ 0.005} \\
        IPGuard & 0.924 $\pm$ 0.042 & 0.983 $\pm$ 0.007 & 0.535 $\pm$ 0.036 & 0.902 $\pm$ 0.029 & 0.938 $\pm$ 0.008 & 0.922 $\pm$ 0.037 & {0.517 $\pm$ 0.029} \\
        MarginFinger & {0.708 $\pm$ 0.130} & 0.750 $\pm$ 0.145 & 0.572 $\pm$ 0.041 & 0.762 $\pm$ 0.149 & 0.759 $\pm$ 0.146 & 0.586 $\pm$ 0.090 & {0.570 $\pm$ 0.034} \\
        AKH & 0.849 $\pm$ 0.017 & 0.870 $\pm$ 0.023 & 0.765 $\pm$ 0.008 & 0.897 $\pm$ 0.026 & 0.864 $\pm$ 0.025 & 0.777 $\pm$ 0.009 & {0.759 $\pm$ 0.011} \\
        ADV-TRA & 0.971 $\pm$ 0.010 & 0.964 $\pm$ 0.007 & 0.596 $\pm$ 0.011 & 0.907 $\pm$ 0.005 & 0.922 $\pm$ 0.004 & 0.953 $\pm$ 0.001 & {0.602 $\pm$ 0.005} \\
        GMFIP & 0.978 $\pm$ 0.021 & 0.971 $\pm$ 0.006 & 0.696 $\pm$ 0.042 & 0.914 $\pm$ 0.038 & 0.945 $\pm$ 0.029 & 0.873 $\pm$ 0.017 & {0.651 $\pm$ 0.054} \\
        \bottomrule
    \end{tabular}}
\end{table}

\begin{table}[t]
    \centering
    \setlength{\tabcolsep}{1.5pt}
    \caption{AUCs under various performance-preserving model modifications on GNN models.}
    \label{apptab:auc across attacks on GNN}
    \resizebox{\textwidth}{!}{
    \begin{tabular}{l|c|c|c|c|c|c|c}
        \toprule
        Method & Pruning & Finetuning & KD & AT & N-finetune & P-finetune & Prune-KD \\
        \midrule
        AnaFP (ours) & \textbf{0.933 $\pm$ 0.002} & \textbf{0.971 $\pm$ 0.007} & \textbf{0.806 $\pm$ 0.002} & \textbf{0.970 $\pm$ 0.012} & \textbf{0.962 $\pm$ 0.013} & \textbf{0.913 $\pm$ 0.006}  & \textbf{0.736 $\pm$ 0.015} \\
        IPGuard & 0.806 $\pm$ 0.051 & 0.668 $\pm$ 0.053 & 0.529 $\pm$ 0.330 & 0.612 $\pm$ 0.122 & 0669 $\pm$ 0.027 & 0.551 $\pm$ 0.074  & 0.571 $\pm$ 0.174 \\
        AKH & 0.859 $\pm$ 0.011 & 0.831 $\pm$ 0.043 & 0.790 $\pm$ 0.021 & 0.869 $\pm$ 0.045 & 0.838 $\pm$ 0.041 & 0.849 $\pm$ 0.038  & 0.719 $\pm$ 0.036 \\
        \bottomrule
    \end{tabular}}
\end{table}

Tables~\ref{tab:AUC across attacks on CIFAR 100},~\ref{apptab:auc across attacks on MLP}, and~\ref{apptab:auc across attacks on GNN} summarize the AUCs of AnaFP and the baseline methods under performance-preserving model modifications across CNN, MLP, and GNN architectures. For CNN models trained on CIFAR-100, AnaFP consistently achieves high AUCs ($\geq$ 0.930) in five out of seven attacks, with the sole exception being KD and Prune-KD, where UAP and AKH marginally outperform AnaFP respectively. For MLP models, AnaFP attains near-perfect AUCs ($\geq$ 0.999) on four attacks and maintains strong performance on AT with an AUC of 0.987; besides, AnaFP performs best under KD and Prune-KD. In contrast, for GNN models, AnaFP achieves the highest AUC across all six attacks, with AUCs, ranging from 0.736 to 0.971, significantly outperforming IPGuard and AKH. It is worth noting that UAP, MarginFinger, ADV-TRA, and GMFIP are not directly applicable to graph-structured data, as they assume a fixed input geometry. The consistently superior performance achieved by AnaFP across diverse model types highlights its strong generalization ability and robustness to modification attacks, which is particularly important for ownership verification in diverse real-world deployment scenarios.

\subsubsection{The Impact of Anchor Selection}

We investigate the impact of confidence levels during anchor selection by performing experiments with low-confidence, mid-confidence, and high-confidence anchors on CNN(CIFAR-10) and MLP(MNIST). The three confidence levels are defined by the range of logit margins $g_P(x^a)$.
Experimental results illustrate that low-confidence anchors (0 $\leq$ $g_P(x^a)$ $<$ 2.5) yield 0 valid fingerprints for both cases, mid-confidence anchors ($2.5 \leq g_P(x^a) < 5.0$) produces 3 valid fingerprints for CNN and 2 for MLP with the achieved AUCs of 0.791 and 0.755, respectively, and high-confidence anchors ($g_P(x^a) \geq 5.0$) produce 57 valid fingerprints for CNN and 85 for MLP with much higher AUC (i.e., 0.958 for CNN and 0.961 for MLP). This confirms that low-confidence anchors fail to produce a sufficient number of valid fingerprints and thus degrade the performance.

\subsubsection{Robustness to the Knowledge Distillation Attack} 

In addition to the main metric AUC, which reflects the overall detection capability across all threshold settings, we will also assess TPR, FPR, TNR, and FNR for the KD attack. The experimental results are summarized in the Table \ref{tab:kd_threshold_metrics}. From this table, we can observe that AnaFP achieves the strongest separation between pirated models under the KD attack and independently trained models, reaching a high true positive rate (TPR=0.80) while keeping the false positive rate relatively low (FPR=0.22). ADV-TRA and UAP perform worse than AnaFP but better than other baselines, showing a moderate-level performance (TPR=0.73, FPR=0.30 for UAP, and TPR=0.75, FPR=0.27 for ADV-TRA). In contrast, AKH, IPGuard, GMFIP, and MarginFinger are relatively less capable of distinguishing pirated models under the KD attack from independently trained models, reflected by their high FPR (0.36-0.47) and only moderately high TPR (0.57-0.67).

\begin{table}[t]
\centering
\caption{Detection performance under the KD attack for CNN/CIFAR10.}
\label{tab:kd_threshold_metrics}
\resizebox{0.4\linewidth}{!}{
\begin{tabular}{lcccc}
\toprule
\textbf{Method} & \textbf{TPR} & \textbf{FPR} & \textbf{TNR} & \textbf{FNR} \\
\midrule
AnaFP (ours)        & \textbf{0.80} & \textbf{0.22} & 0.78 & 0.20 \\
UAP                 & 0.73 & 0.30 & 0.70 & 0.27 \\
IPGuard             & 0.58 & 0.46 & 0.54 & 0.42 \\
MarginFinger        & 0.57 & 0.47 & 0.53 & 0.43 \\
AKH                 & 0.63 & 0.40 & 0.60 & 0.37 \\
ADV-TRA             & 0.75 & 0.27 & 0.73 & 0.25 \\
GMFIP               & 0.67 & 0.36 & 0.64 & 0.33 \\
\bottomrule
\end{tabular}}
\vspace{-5pt}
\end{table}

\subsection{Efficiency Analysis}

We experimentally compare the computing time and peak GPU memory consumed by AnaFP and baselines. The experimental results are summarized in Table~\ref{tab:efficiency_overhead}. Specifically, AKH and IPGuard are the most lightweight methods: AKH operates purely through forward inference, and IPGuard performs small-batch gradient updates without minimizing perturbation magnitude, resulting in runtimes of around one minute and memory usage below 3 GB. UAP is the most expensive method, as it requires multiple full passes over the training data to optimize a universal perturbation, leading to high computing time and memory usage. Both MarginFinger and GMFIP achieve moderate cost by training a generator model to synthesize boundary-adjacent samples. ADV-TRA generates adversarial trajectories as fingerprints, which causes high time cost due to the sequential generation process. Compared to these baselines, AnaFP consumes high memory but achieves balanced runtime. More specifically, AnaFP costs the most GPU memory—up to 40 GB on ViT-S/16, as it processes a large batch of anchors in parallel during the adversarial optimization step. This parallel design reduces the total number of optimization iterations, thus reducing the overall runtime. Despite the high memory usage, AnaFP can still run on a single A100 GPU, making the cost manageable in practice.
\begin{table}[t]
\centering
\caption{Time cost and peak GPU memory consumption of different fingerprinting methods under two representative protected models.}
\label{tab:efficiency_overhead}
\begin{tabular}{lcccc}
\toprule
& \multicolumn{2}{c}{ResNet-18} 
& \multicolumn{2}{c}{ViT-S/16} \\
\cmidrule(lr){2-3} \cmidrule(lr){4-5}
& Time & Memory & Time & Memory \\
\midrule
AnaFP (ours)        
& 33 m 8 s        & 13 GB  & 1 h 26 m  & 40 GB \\
UAP            
& 4 h 55 m        & 4 GB   & 8 h 10 m  & 10 GB \\
IPGuard         
& 1 m 6 s         & 1 GB   & 1 m 41 s  & 3 GB \\
MarginFinger 
& 51 m 18 s       & 4 GB   & 1 h 14 m  & 10 GB \\
AKH            
& 20 s            & 1 GB   & 1 m 10 s  & 3 GB \\
ADV-TRA
& 58 m 01 s       & 4 GB   & 1 h 29 m  & 10 GB \\
GMFIP            
& 1 h 15 m        & 4 GB   & 1 h 40 m  & 10 GB \\
\bottomrule
\end{tabular}
\end{table}

Moreover, we profiled the runtime of the three steps in our proposed fingerprinting method: Step 1 (anchor selection), Step 2 (adversarial perturbation computation), and Step 3 (searching stretch factors). We also examine the impact of the model sizes (ResNet18 and ViT-S/16) and the number of surrogate models used. The results are presented in the Table \ref{tab:time_cost1}. From the table, we can see that the runtime cost is acceptable. Specifically, first, Step 2 dominates the total runtime, requiring 1000–3800 seconds depending on the protected model size and surrogate pool size, whereas the time cost of Step 1 is negligible, and Step 3 contributes a moderate amount (1000–2100 seconds). Second, the time cost scales with model size: fingerprinting ViT-S/16 requires significantly more time cost than fingerprinting ResNet-18, because a larger model incurs more computations during the fingerprint generation process. Third, increasing the surrogate pool from 4 to 8 models increases the runtime of Step 3, where grid search accounts for the major time cost. From the above discussion, we can see AnaFP's runtime is moderate.

\begin{table}[t]
\centering
\small
\caption{Time cost for the three fingerprint-generation steps:
Step 1 (anchor selection), Step 2 (adversarial perturbation generation), and Step 3 ($\tau$ search).}
\label{tab:time_cost1}
\begin{tabular}{lcccc}
\toprule
\textbf{Protected Model} & \textbf{Surrogate pool size} & \textbf{Step 1} & \textbf{Step 2} & \textbf{Step 3} \\
\midrule
ViT-S/16  & 4 & 63s & 3856s & 1276s \\
ResNet18  & 4 & 8s  & 1004s & 976s  \\
ViT-S/16  & 8 & 61s & 3855s & 2121s \\
ResNet18  & 8 & 8s  & 997s  & 1419s \\
\bottomrule
\end{tabular}
\end{table}

We also measured GPU memory usage during Step 2 and Step 3, the experimental results are summarized in Table \ref{tab:memory_cost1}. We can observe that first, Step 2 requires the highest memory because it performs adversarial optimization, consuming 13 GB for ResNet-18 and about 40 GB for ViT-S/16, whereas Step 3 requires 1–3 GB. Second, memory usage grows with model size: ViT-S/16 on Tiny-ImageNet has roughly three times the memory requirements of ResNet-18 due to a larger model size. Third, increasing the surrogate pool size from 4 to 8 models increases memory usage slightly.

\begin{table}[t]
\centering
\small
\caption{Peak memory consumption for Step 2 (adversarial perturbation generation) and Step 3 (grid search).}
\label{tab:memory_cost1}
\begin{tabular}{lccc}
\toprule
\textbf{Protected Model} & \textbf{Surrogates} & \textbf{Step 2} & \textbf{Step 3} \\
\midrule
ViT-S/16  & 4 & 40G & 3G \\
ResNet18  & 4 & 13G & 1G   \\
ViT-S/16  & 8 & 41G & 3.2G   \\
ResNet18  & 8 & 13G & 1.1G \\
\bottomrule
\end{tabular}
\end{table}

\textbf{Discussion on Potential Speedups.} The computational efficiency of AnaFP can be further improved through the following practices. First, regarding surrogate pools, independent surrogate models can be drawn directly from publicly available models, eliminating the need for additional training, while generating pirated surrogate models via fine-tuning or pruning incurs very low cost. Even performing knowledge distillation, which is relatively more expensive, remains cheaper than training models from scratch. Second, the $C\&W$ optimization used in Step 2 (computing minimal decision‐altering perturbations) can be replaced with more efficient alternatives, such as PGD \citep{TDLM2018} and DeepFool \citep{DASA2015}, which can reduce computations. Third, a coarse-to-fine search strategy \citep{CEMC1999} can be applied to further reduce the computational cost.

\section{Implementation Details}
\label{appsec:implementation}

\subsection{Model Architectures and Training Settings}
We summarize the architectures and training configurations of the protected model, the attacked models, and the independently trained models used for each task in the following.

\subsubsection{Protected Model}
\textbf{CNN (CIFAR-10 and CIFAR-100).} 
The protected model is a ResNet-18 trained from scratch using SGD with momentum 0.9, weight decay 5e-4, learning rate 0.1, cosine annealing learning rate scheduling, and Xavier initialization. Training is conducted for 600 epochs with a batch size of 128.

\textbf{MLP (MNIST).}
The protected model is a ResMLP trained from scratch using SGD with momentum 0.9, weight decay 5e-4, a learning rate of 0.01, cosine annealing learning rate scheduling, and Kaiming initialization (default Pytorch setting). Training is conducted for 40 epochs with a batch size of 64.

\textbf{GNN (PROTEINS).}
The protected model is a Graph Attention Network (GAT) trained using the Adam optimizer with a learning rate of 0.01 and Xavier initialization (default Pytorch setting). The default PyTorch Adam settings are used: $\beta_1 = 0.9$ and $\beta_2 = 0.999$. 

\subsubsection{Independently Trained Models}
\textbf{CNN (CIFAR-10 and CIFAR-100).} 
The independent model set consists of ten diverse architectures: ResNet-18, ResNet-50, ResNet-101, WideResNet-50, MobileNetV2, MobileNetV3-Large, EfficientNet-B2, EfficientNet-B4, DenseNet-121, and DenseNet-169. Each model is trained from scratch with different random seeds under the same optimizer, scheduler, and initialization scheme as the protected model.

\textbf{MLP (MNIST).}
The independent model set includes seven multilayer perceptron architectures: WideDeep, ResMLP, FTMLP, SNNMLP, NODE, TabNet, and TabTransformer. Each model is trained from scratch under the same optimizer, learning rate, scheduler, and number of epochs as the protected model, but with different random seeds and architecture-specific parameters where applicable.

\textbf{GNN (PROTEINS).}
The independent model set includes seven graph neural network architectures: GCN, GIN, GraphSAGE, GAT, GATv2, SGC, and APPNP. Each model is trained from scratch using the same optimizer, learning rate, and number of epochs as the protected model, but with different random seeds and architecture-specific configurations.

\subsubsection{Pirated Models}

We construct pirated models across all tasks using seven types of performance-preserving attacks applied to the protected model with the following settings.
\begin{itemize}[left=0pt]
\item \textbf{Fine-tuning}: retraining the protected model for a specified number of epochs.

\item \textbf{Adversarial Training (AT)}: retraining the protected model for a specified number of epochs using a mix of normal data samples and adversarial samples.

\item \textbf{Pruning}: applying unstructured global pruning with sparsity levels ranging from 10\% to 90\% in increments of 10\%, without retraining.

\item \textbf{P-Finetune}: applying pruning at sparsity levels of 30\%, 60\%, and 90\%, followed by fine-tuning.

\item \textbf{N-Finetune}: perturbing each trainable parameter tensor with Gaussian noise scaled by its standard deviation, i.e., ${param} \mathrel{+}= 0.09 \times {std(param)} \times \mathcal{N}(0, 1)$, followed by fine-tuning.

\item \textbf{Knowledge Distillation (KD)}: training a student model of the same or different architecture with the protected model using the KL divergence between the outputs of the protected (teacher) model and the student model, temperature ${T}$ is set to 1.

\item \textbf{Prune-KD}: performing pruning to the protected model, followed by knowledge distillation (KD) attack.

\end{itemize}

All attacks are applied consistently across tasks, with differences only in optimizer settings, learning rates, and the number of fine-tuning epochs. For \textbf{CNN (CIFAR-10)}, we use SGD with a learning rate of 0.01 and a cosine annealing scheduler for 200 epochs (600 epochs for KD). For \textbf{MLP (MNIST)}, the optimizer is SGD with a learning rate of 0.001, also using cosine annealing over 30 epochs (30 epochs for KD). For \textbf{GNN (PROTEINS)}, we use the Adam optimizer with a learning rate of 0.001 for 60 epochs (100 epochs for KD).

\subsubsection{Data Preprocessing}

All datasets are normalized prior to training the DNN models, in line with standard preprocessing practice. For the MNIST dataset, each $28 \times 28$ grayscale image is flattened into a 784-dimensional vector to serve as the input to the MLP model. For CIFAR-10, CIFAR-100, and PROTEINS, no additional preprocessing is applied beyond normalization.

\subsection{The Implementation Details of AnaFP}

\subsubsection{Anchor Selection and Adversarial Perturbation}

To identify anchors, we use a logit margin threshold of $m_{\text{anchor}} = 1.0$ for GNN on PROTEINS, and $m_{\text{anchor}} = 8.0$ for all other tasks. To compute the minimal perturbation that can alter prediction results (cf. Step~2 of AnaFP), we adopt the C\&W-$\ell_2$ attack. The attack is parameterized with a confidence margin ${kappa} = 0$, the number of optimization steps ${steps} = 3000$, and the optimization learning rate ${lr} = 0.01$. We set the regularization constant $c$ per model--dataset pair, since the difficulty of finding decision-altering adversarial perturbations varies across models and dataset; for example, we use $c=10^{-4}$ for CNNs on CIFAR-10 and CIFAR-100, while using $c=0.9$ for MLPs on MNIST and $c=10^{-2}$ for GNNs on PROTEINS. In practice, we recommend tuning $c$ based on the number of successfully obtained boundary points, increasing $c$ if too few boundary points are found.

\subsubsection{Surrogate Model Pools Construction}
\label{ap:surrogate model pool}
To enable estimation of theoretical bounds in Step~3 of AnaFP, we construct two surrogate model pools for each task: a pirated model pool consisting of six models obtained via two simulated performance-preserving modifications of the protected model (fine-tuning and knowledge distillation), and an independent model pool consisting of six independently trained models trained from scratch with different architectures and random seeds. All surrogate models and test models are trained with different random seeds and architectures, ensuring no surrogate models overlap with any test models.

Specifically, for the CNN on CIFAR-10 and CIFAR-100, pirated models include fine-tuned and knowledge-distilled models using ResNet-18 and DenseNet-121. The architectures of independently trained models include ResNet-18 and DenseNet-169, with three random seeds per architecture. For the MLP on MNIST, pirated models include fine-tuned ones, along with knowledge-distilled models using FT-MLP and Wide\&Deep. The architectures of independently trained models consist of NODE and ResMLP, each instantiated with three seeds. For the GNN task on PROTEINS, pirated models include fine-tuned ones, as well as knowledge-distilled models with GAT and GATv2. The architectures of independently trained models consist of APPNP and GAT, each trained with three random seeds. 

In principle, the surrogate pool should be as diverse as possible; however, under a constrained budget, we prioritize challenging surrogates—e.g., knowledge-distilled variants—in the pirated pool, and include independently trained models that share the same architecture with the protected model in the independent pool, since such models are behaviorally closer to the protected model and therefore yield a stricter and more informative bound.

\subsubsection{Parameter Estimation}

The three parameters $m_{\min}$, $L_{\text{uniq}}$, and $\epsilon_{\text{logit}}$ are estimated using the surrogate model pools described in Appendix~\ref{ap:surrogate model pool}. For each anchor, the lower bound $m_{\min}$ of the logit margin is estimated by computing the logit margin for each surrogate independently trained model $I$, and aggregating using the $q_{\mathrm{margin}}$-quantile. The local Lipschitz constant $L_{\text{uniq}}$ is estimated for each independently trained model based on the ratio of logit difference to input norm between the anchor input $x^a$ and its boundary point $x^a + \delta^*$, and aggregated via the $q_{\mathrm{lip}}$-quantile. The bound $\epsilon_{\text{logit}}$ of the logit shift bound is computed as the maximum per-class logit difference between the protected model and the pirated ones at the perturbed input $x^{\star}$, and relaxed using $q_{\mathrm{eps}}$-quantile. The quantile configurations are as follows. For the CNN model evaluated on CIFAR-10 and MLP on MNIST, $q_{\mathrm{margin}}$, $q_{\mathrm{lip}}$, and $q_{\mathrm{eps}}$ are set to be 0.5, 0.5, and 1.0, respectively. For GNN on PROTEINS, we set $q_{\mathrm{margin}}$, $q_{\mathrm{lip}}$, and $q_{\mathrm{eps}}$ to be 0.4, 0.6, and 1.0, respectively. For CNN on CIFAR-100, we set $q_{\mathrm{margin}}$, $q_{\mathrm{lip}}$, and $q_{\mathrm{eps}}$ are set to be 0.3, 0.7, and 1.0, respectively. In practice, though the optimal values often vary across different scenarios, selecting the thresholds requires only coarse adjustment, typically by gradually relaxing them from the conservative end until a reasonable number of valid fingerprints is found (e.g., 20--100), which is corroborated by the experimental results in Section~\ref{exp:impact of quantile}.

\subsubsection{Grid Search}
The grid search over the interval $(1,\tau_{\text{upper}}]$ is implemented by uniformly sampling a finite number of stretch factor value candidates. Specifically, we sample $N_{\text{grid}}$ points within this interval, where $N_{\text{grid}}=500$ is used by default in all our experiments. Increasing $N_{\text{grid}}$ enables a finer-grained search over $(1,\tau_{\text{upper}}]$ and may yield better performance in principle, but it also increases the computational cost proportionally.

\section{Scope and Applicability} 

AnaFP is fundamentally task-dependent, as it constructs and verifies fingerprints based on class decision boundaries. As a result, like most existing fingerprinting methods, it is naturally designed for classification tasks. This dependence limits its applicability in scenarios where well-defined class boundaries do not exist, such as regression problems (which produce continuous outputs without discrete decision regions) and many self-supervised / unsupervised models whose objectives are not label-based (e.g., representation learning without a classifier head). Extending AnaFP to these settings would likely require redefining the fingerprint signal around alternative structural properties, such as similarity geometry in embedding space or task-specific surrogate decision functions. We leave this extension to future work.

\section{The Computation Platform}
\label{appsec:platform information}

All experiments are conducted on a high-performance server equipped with an NVIDIA A100 GPU with 40\,GB of memory and an Intel Xeon Gold 6246 CPU running at 3.30\,GHz. The software environment includes Python 3.9.21, PyTorch 2.5.1, and CUDA 12.5.

\section{Related Work}

Model fingerprinting has become a key paradigm for protecting intellectual property (IP) of DNN models \citep{IPIP21,AYSM22,CEHL21}. In contrast to watermarking that embeds additional functionalities into model parameters, which introduces security issues to the model \citep{CEHL21,UWSD24,WDNN24,CMSD25}, fingerprinting provides a non-intrusive alternative that protects a model's ownership by extracting its intrinsic decision behavior without making any modifications to it. The majority of existing methods consider a practical black-box setting where ownership verification is conducted by querying the suspect model with fingerprints and comparing the similarity between the outputs of the protected model and the suspect model \citep{IPIP21,UWSD24,DMFA24,GAFF24,GGFG23,NNFW22, NGNF23,FDNN22,MCGF24,QuRD25}. 

Among these methods, many works leverage adversarial examples as fingerprints to induce model-specific outputs for the protected models \citep{AAFA20,NNFW22,FDNN22,NGNF23}. To improve robustness against ownership obfuscation techniques, several studies create variants for the protected model to optimize fingerprints \citep{CEHL21, IERF21}. To further enhance the detection capability, \citet{DNNF21,MTDS21,MFDN22,GGFG23} train fingerprints to yield consistent outputs across pirated variants of the protected model while remaining distinguishable from independently trained models. Besides, some methods extract robust, task-relevant features such as universal perturbations or adversarial trajectories to construct stronger identifiers \cite{FDNN22,UWSD24}. Recently, \citet{QuRD25} leverages the training samples that have been misclassified by the protected model as fingerprints, which demonstrate stronger discriminative power than normal adversarial examples. \citet{TRNN2025} proposes a robust fingerprinting scheme that embeds ownership information in the frequency domain. \citet{BUNN2025} identifies the most informative fingerprints from the set to achieve higher verification performance under limited query budgets.

For model fingerprinting, an open question is how to craft fingerprints that can satisfy the properties of robustness (robust to model modification attacks) and uniqueness (not falsely attribute independently trained models as pirated) simultaneously. Several previous works provided their answers, i.e., empirically controlling the placement of fingerprints relative to the decision boundaries \citep{IPIP21,MCGF24}. However, these approaches rely on empirically tuned distance without any theoretical guidance, which can lead to unstable fingerprint behavior and inconsistent verification performance. In this paper, we propose an analytical fingerprinting approach that crafts adversarial example-based fingerprints under the guidance of formalized properties of robustness and uniqueness.

\section{Societal Impacts}
\label{appsec:societal impacts}

Our work proposes a fingerprinting framework for verifying the ownership of deep neural network (DNN) models, aiming to protect the intellectual property (IP) of model creators. By enabling reliable and robust ownership verification, our method contributes positively to the development of secure and trustworthy AI ecosystems. This can help mitigate unauthorized usage, model stealing, and commercial misuse, encouraging responsible AI deployment and fair attribution in both academia and industry.

\end{document}